\newcommand\inner[2]{\left\langle #1, #2 \right\rangle}
\newtheorem{theorem}{Theorem}[]
\newtheorem{definition}[theorem]{Definition}
\newtheorem{lemma}[theorem]{Lemma}
\newtheorem{proposition}[theorem]{Proposition}
\begin{document}


\title{SDP for One-shot Dilution of Quantum Coherence}

\author{Yikang Zhu}
\affiliation{School of Computer Science and Technology, University of Science and Technology of China, Hefei 230026, China.} 

\author{Zhaofeng Su}
\email{zfsu@ustc.edu.cn}
\affiliation{School of Computer Science and Technology, University of Science and Technology of China, Hefei 230026, China.} 
\affiliation{CAS Key Laboratory of Wireless-Optical Communications, University of Science and Technology of China, Hefei 230026, China.}


\date{\today}

\begin{abstract}
   Quantum coherence is one of the fundamental properties of quantum mechanics and also acts as a valuable resource for a variety of practical applications, which includes quantum computing and quantum information processing.
   Evaluating the dilution of coherence is a basic problem in the framework of resource theory. We consider the coherence dilution problem in the one-shot scenario.
   We find a semidefinite program of one-shot coherence dilution of pure state under maximally incoherent operation. We further give a similar but not semidefinite program form under dephasing-covariant incoherent operation.
   Moreover, we prove that the known lower bound of the one-shot dilution is strict.
   Our numerical experiment clearly demonstrates that the maximally incoherent operation and dephasing-covariant incoherent operation have different power in the coherence dilution.
  
\end{abstract}

\maketitle

\section{Introduction}

Coherence is a fundamental property of quantum systems, arising from the phenomenon of quantum superposition. Quantum coherence plays an indispensable role in a variety of fields, for example, quantum thermodynamics \cite{PhysRevLett.111.250404}, quantum biology~\cite{sciadv.aaz4888}, quantum metrology \cite{giorda2017coherence}, quantum information processing protocols such as quantum channel discrimination \cite{PhysRevLett.116.150502, PhysRevA.93.042107} and quantum state merging~\cite{PhysRevLett.116.240405}, quantum computing tasks such as quantum algorithms \cite{PhysRevA.93.012111, PhysRevA.95.032307}. In recently years, the phenomenon of quantum coherence has been realized in quantum computing engineering~\cite{siddiqi2021engineering, luo2023quantum}. 

Due to the development of quantum information science, quantum coherence acts as a valuable resource which is under the framework of resource theory.
A general resource theory considers two fundamental ingredients, namely a free set of states and a free set of operations. The definition of free operation varies on motivations, while the free set of states is closed under any free operation in the corresponding free set of operations. States that are not in the free set are called resourceful states~\cite{aberg2006quantifying, baumgratzQuantifyingCoherence2014}.
Quantifying the resource is a fundamental question in the resource theory. Several measures have been proposed for the quantification problem, which include distance-based constructions, entropic measures, geometric measures, and witness-based measures~\cite{chitambarQuantumResourceTheories2019}. 

In the resource theory of coherence, free states are quantum states that are diagonal in a referenced computational basis, which are also known as incoherent states.
There have been proposed four important classes of free operations, the maximally incoherent operation (MIO) \cite{aberg2006quantifying}, the dephasing-covariant incoherent operation (DIO) \cite{PhysRevLett.117.030401, PhysRevA.94.052324}, the incoherent operation (IO) \cite{baumgratzQuantifyingCoherence2014} and the strictly incoherent operation (SIO) \cite{winterOperationalResourceTheory2016}. 
The most frequently used measures of quantum coherence include the relative entropy of coherence $C_r$ \cite{baumgratzQuantifyingCoherence2014} and the coherence of formation $C_f$ \cite{yuanIntrinsicRandomnessMeasure2015}.

The transformation of quantum states under free operations is a research topic of greatest interest in the field of resource theory~\cite{chitambarQuantumResourceTheories2019}. The procedure for converting a given quantum state $\rho$ into the canonical resource state $\Psi_d$ is known as \textit{distillation} while the reverse procedure is known as \textit{dilution}. Consider the transformation $\rho^{\otimes n} \rightarrow \Psi_2^{\otimes n R}$ under free operations. Suppose infinite independent and identical copies of $\rho$ can be used in the scenario. The maximal proportion $R$ is defined as the asymptotic distillation rate of the state $\rho$. The similar definition can be applied for the asymptotic dilution rate. The conversions between pure states under IO and SIO have been extensively studied ~\cite{winterOperationalResourceTheory2016, duConditionsCoherenceTransformations2015, zhuOperationalOnetooneMapping2017}. Winter and Yang showed that the asymptotic rate of distillation and dilution under IO and SIO are the corresponding relative entropy of coherence $C_r$ and the coherence of formation $C_f$, respectively \cite{winterOperationalResourceTheory2016}.

In practical terms, the resources are finite and the number of quantum states prepared for information processing tasks is also limited. Therefore, it is necessary to consider the one-shot scenario and investigate the corresponding distillation and dilution rates of coherence. Bartosz and his collaborators developed the framework of coherence distillation under one-shot scenario~\cite{regulaOneShotCoherenceDistillation2018}. They proposed a semidefinite program (SDP) for efficiently computing the one-shot distillation of coherence and showed that MIO and DIO have the same power in the scenario. The framework for one-shot scenario coherence dilution was proposed nearly at the same time and the relation between coherence dilution under MIO, DIO, IO and SIO were systematical investigated in perspective of inequalities~\cite{zhaoOneShotCoherenceDilution2018}.  
Hayashi and his collaborators obtained the second order asymptotics of coherence distillation, which is a more accurate approximation~\cite{hayashiFiniteBlockLength2021}.
To date, the second order of dilution remains an open problem.

In this paper, we investigate the one-shot scenario of coherence dilution. 
We find a SDP of one-shot coherence dilution of pure state under MIO and give a similar but not SDP form under DIO. 
Based on our result, accurate value of $C_{\mathcal{O}}^{(1), \epsilon}$ can be obtained through numerical calculations.
For the case of pure states, we prove that our results of both MIO and DIO coincide with the lower bounds proposed by Zhao, \textit{et al.}\cite{zhaoOneShotCoherenceDilution2018}.
Through numerical experiments, we find that there is a gap between $C_{\text{MIO}}^{(1), \epsilon}$ and $C_{\text{DIO}}^{(1), \epsilon}$. Although the power of MIO and DIO are the same for the dilution in the asymptotic scenario and the coherence distillation, our result demonstrates that MIO outperforms than DIO in the coherence dilution. 

\section{Preliminary}

Coherence is a basis-dependent concept, which means we must fix the referenced orthonormal basis. 
Given a $d$-dimensional Hilbert space $\mathcal{H}$, we use $\mathcal{B}(\mathcal{H})$ to denote the set of all bounded trace class operators on $\mathcal{H}$, and $\mathbb{D} = \{ \rho \in \mathcal{B}(\mathcal{H}) \mid \tr \rho = 1, \rho \geq 0 \}$ to denote all density operators. We choose an orthonormal basis $\{ \ket{i} \mid i = 0, \ldots, d-1 \}$, the density operators that are diagonal in this referenced basis form the incoherent set $\mathcal{I} \subset \mathbb{D}$. 
Therefore, all incoherent states $\rho \in \mathcal{I}$ are of the form
\begin{equation}
	\rho = \sum_{i=0}^{d-1} p_i \ketbra{i}{i}
\end{equation}
with probabilities $p_i$. If $\rho \notin \mathcal{I}$, we say $\rho$ is coherent or $\rho$ has coherence.

The definition of free operations for the resource theory of coherence is not unique. Here we list some important classes. 
The largest class is the maximally incoherent operations (MIO) \cite{aberg2006quantifying}, which contains completely positive trace preserving (CPTP) maps $\Lambda$ such that $\Lambda(\delta) \in \mathcal{I}$, $\forall\, \delta \in \mathcal{I}$. MIO can't produce coherence from incoherent states.
An interesting subset of MIO are the dephasing-covariant incoherent operations (DIO) \cite{PhysRevLett.117.030401, PhysRevA.94.052324}, which contain CPTP maps $\Lambda$ such that $\Lambda \circ \Delta = \Delta \circ \Lambda$, where $\Delta$ is the completely dephasing channel $\Delta(\cdot) := \sum_{i=0}^{d-1} \ketbra{i} \cdot \ketbra{i}$. An equivalent definition of DIO is $\Delta \circ \Lambda(\ketbra{i}{j}) = 0$ for $i \neq j$.
Another subset of MIO are the incoherent operations (IO) \cite{baumgratzQuantifyingCoherence2014}, which are CPTP maps that admit a Kraus operator representation $\Lambda(\rho) = \sum_n K_n \rho K_n^\dagger$ with $\{K_n\}$ being incoherent-preserving operators, that says $K_n \rho K_n^\dagger = p \delta$, where $0 \leq p \leq 1$ and $\delta \in \mathcal{I}$ for any $\rho \in \mathcal{I}$.
Finally, the strictly incoherent operations (SIO) \cite{winterOperationalResourceTheory2016} are CPTP maps that admit a Kraus operator representation $\Lambda(\rho) = \sum_n K_n \rho K_n^\dagger$ with both $\{K_n\}$ and $\{K_n^\dagger\}$ being incoherent-preserving operators.
Fig.~\ref{fig:operations} shows the relationship between the above four free operations.

\begin{figure}[htb]
	\centering
	\includegraphics[width=0.3\textwidth]{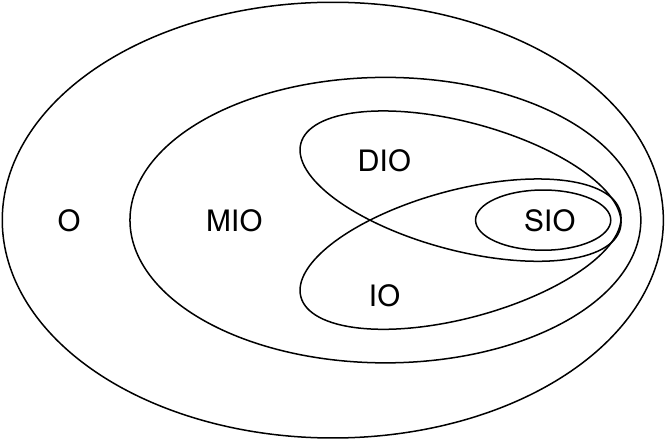}
	\caption{Relationship between MIO, DIO, IO and SIO.}
	\label{fig:operations}
\end{figure}

The canonical maximally coherent state of dimension $d$ is defined as 
\begin{equation}
	\ket{\Psi_d} := \frac{1}{\sqrt{d}} \sum_{i=0}^{d-1} \ket{i} 
\end{equation}
and we often denote $\Psi_d := \ketbra{\Psi_d}$. In the resource theory, we usually consider how to transform general states to the maximally resourceful state using free operations, and vice versa. These two kinds of transformation are called distillation and dilution. Formally, we can define the one-shot coherence dilution as the following.

\begin{definition}[One-shot coherence dilution]
	Let $\mathcal{O} \in \{\text{MIO}, \text{DIO}, \text{IO}, \text{SIO}\}$ denote a class of free operations.
	Given the state $\rho \in \mathbb{D}$ and $\epsilon \geq 0$, the one-shot coherence dilution rate (also called coherence cost) is defined as
	\begin{equation}
		C_{\mathcal{O}}^{(1), \epsilon}(\rho) := \min\{ \log_2 m \mid \Lambda \in \mathcal{O}, F[\Lambda(\Psi_m), \rho] \geq 1- \epsilon \} \label{def:dilution}
	\end{equation}
\end{definition}

The dilution procedure concerns how much coherence will be consumed to produce the given state $\rho$ using the free operations. Here we use $\Psi_2$ as the unit of coherence resource, so $\Psi_m$ has $\log_2 m$ units of coherence. In this definition, we use the fidelity $F(\rho, \sigma) := (\tr [\sqrt{\sqrt{\rho} \sigma \sqrt{\rho}}] )^2$ to measure the closeness between two states.

The superoperator is a linear map $\mathcal{E} : \mathcal{B}(\mathcal{H}) \rightarrow \mathcal{B}(\mathcal{H'})$, that means for any $\rho, \sigma \in \mathcal{B}(\mathcal{H})$ and any $\lambda, \mu \in \mathbb{C}$, we have
\begin{equation}
	\mathcal{E}(\lambda \rho + \mu \sigma) = \lambda \mathcal{E}(\rho) + \mu \mathcal{E} (\sigma)
\end{equation}
The inner product between two operators is defined as $\inner{X}{Y} := \tr X^\dagger Y$, and the adjoint of a superoperator $\mathcal{E}$ is defined as $\inner{X}{\mathcal{E}(Y)} = \inner{\mathcal{E}^\dagger (X)}{Y}$. A superoperator $\mathcal{E}$ is called trace preserving (TP) if $\tr [\mathcal{E}(X)] = \tr X$ for any $X \in \mathcal{B}(\mathcal{H})$. A superoperator $\mathcal{E}$ is called unital if $\mathcal{E}(I_A) = I_B$, where $I_A$ is the identity operator in Hilbert space $\mathcal{H}_A$. A superoperator $\mathcal{E}$ is called positive if $\mathcal{E}(X) \geq 0$ for any $X \geq 0$, and $\mathcal{E}$ is called completely positive (CP) if $\text{id}_k \otimes \mathcal{E}$ is positive for any $k \geq 1$, where $\text{id}_k$ is the identity superoperator on $\mathcal{B}(\mathbb{C}^k)$.
A quantum channel is a physical procedure that describes the evolution $\mathcal{E}(\rho) = \rho'$, and the quantum channel $\mathcal{E}$ is a CPTP superoperator.

In order to describe the superoperator, here we use the Choi operator representation. For any superoperator $\mathcal{E}: \mathcal{B}(\mathcal{H}_A) \rightarrow \mathcal{B}(\mathcal{H}_B)$, its Choi operator $J(\mathcal{E}) \in \mathcal{B}(\mathcal{H}_A \otimes \mathcal{H}_B)$ is defined as
\begin{equation}
	J(\mathcal{E}) := \sum_{i, j = 0}^{d_A} \ketbra{i}{j}_A \otimes \mathcal{E}_{A \rightarrow B}\left(\ketbra{i}{j}_A\right).
\end{equation}
Conversely, the superoperator $\mathcal{E}$ can be reconstructed from its Choi operator $J(\mathcal{E})$ via the equation as follows,
\begin{equation}
	\mathcal{E}(X_A) = \tr_A \left[ J(\mathcal{E}) \left( X_A^T \otimes I_B \right) \right].
\end{equation}

We have summarized some important properties of superoperator $\mathcal{E}$ and the corresponding Choi operator $J(\mathcal{E})$ as the Lemma~\ref{lemma:superoperatorProperties}, which are useful for the following discussions. For more information, readers can refer the textbook by John Watrous~\cite{watrous2018theory}.

\begin{lemma}[Properties of superoperator and Choi representation]\label{lemma:superoperatorProperties}
For any superoperator $\mathcal{E}:\mathcal{B}(\mathcal{H}_A) \rightarrow \mathcal{B}(\mathcal{H}_B)$, we have the following properties:
\begin{itemize}
	\item $\mathcal{E}$ is completely positive iff $J(\mathcal{E})$ is positive.
	\item $\mathcal{E}$ is trace preserving iff $\tr_B J(\mathcal{E}) = I_A$.
	\item $\mathcal{E}$ is unital iff $\tr_A J(\mathcal{E}) = I_B$.
	\item $\mathcal{E}$ is completely positive iff $\mathcal{E}^\dagger$ is completely positive.
	\item $\mathcal{E}$ is trace preserving iff $\mathcal{E}^\dagger$ is unital.
\end{itemize}
\end{lemma}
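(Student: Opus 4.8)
The plan is to verify each of the five equivalences by unwinding the definition $J(\mathcal{E}) = \sum_{i,j} \ketbra{i}{j}_A \otimes \mathcal{E}(\ketbra{i}{j}_A)$ together with the reconstruction formula $\mathcal{E}(X_A) = \tr_A[J(\mathcal{E})(X_A^T \otimes I_B)]$, handling the purely linear-algebraic statements first and isolating the one genuinely substantive claim (Choi's theorem) for separate treatment. I would begin with the trace-preserving and unital conditions, since these follow from a one-line partial-trace computation. Taking the partial trace over $B$ gives $\tr_B J(\mathcal{E}) = \sum_{i,j} \ketbra{i}{j}_A \, \tr[\mathcal{E}(\ketbra{i}{j}_A)]$; if $\mathcal{E}$ is trace preserving this collapses to $\sum_i \ketbra{i}{i}_A = I_A$, and conversely matching the coefficient of each $\ketbra{i}{j}$ recovers $\tr[\mathcal{E}(\ketbra{i}{j})] = \delta_{ij}$, which extends to all $X$ by linearity. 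Likewise $\tr_A J(\mathcal{E}) = \sum_i \mathcal{E}(\ketbra{i}{i}) = \mathcal{E}(I_A)$, so $\tr_A J(\mathcal{E}) = I_B$ is literally the statement that $\mathcal{E}$ is unital. These dispose of the second and third bullets.

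For the first bullet, the key observation is that $J(\mathcal{E}) = (\mathrm{id}_A \otimes \mathcal{E})(\ketbra{\omega}{\omega})$, where $\ket{\omega} := \sum_i \ket{i}_A \ket{i}_A$ is the unnormalized maximally entangled vector. The forward direction is then immediate: if $\mathcal{E}$ is completely positive then $\mathrm{id}_A \otimes \mathcal{E}$ is positive, and since $\ketbra{\omega}{\omega} \geq 0$ we get $J(\mathcal{E}) \geq 0$. The converse is where the real work lies and is the main obstacle: assuming $J(\mathcal{E}) \geq 0$ I would take a positive decomposition $J(\mathcal{E}) = \sum_k \ketbra{v_k}{v_k}$, use the vectorization isomorphism sending $\ket{v_k} = \sum_i \ket{i}_A \otimes (K_k \ket{i})_B$ to an operator $K_k : \mathcal{H}_A \to \mathcal{H}_B$, and then verify through the reconstruction formula that $\mathcal{E}(X) = \sum_k K_k X K_k^\dagger$. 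This last identity, checked by a direct index computation, exhibits $\mathcal{E}$ in Kraus form and hence as completely positive. This is essentially Choi's theorem, and the bookkeeping in matching the vectorized eigenvectors to Kraus operators is the only delicate step.

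The remaining two bullets follow by combining the above with the adjoint. For the trace-preserving/unital duality I would use the adjoint relation $\tr[\mathcal{E}(X)] = \inner{I_B}{\mathcal{E}(X)} = \inner{\mathcal{E}^\dagger(I_B)}{X}$ together with $\tr[X] = \inner{I_A}{X}$; equality for all $X$ forces $\mathcal{E}^\dagger(I_B) = I_A$ by nondegeneracy of the inner product, which is exactly unitality of $\mathcal{E}^\dagger$, and the argument is reversible. Finally, for the completely-positive/completely-positive duality, once the Kraus form $\mathcal{E}(\rho) = \sum_k K_k \rho K_k^\dagger$ is available from the first bullet, a short calculation with the inner-product definition of the adjoint gives $\mathcal{E}^\dagger(X) = \sum_k K_k^\dagger X K_k$, which is again a Kraus decomposition and hence completely positive; applying the same reasoning to $(\mathcal{E}^\dagger)^\dagger = \mathcal{E}$ yields the converse. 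Since every step after Choi's theorem is routine, I expect the construction of Kraus operators from a positive Choi operator to be the only place requiring genuine care.
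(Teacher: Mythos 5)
Your proposal is correct. The paper itself offers no proof of this lemma --- it simply defers to Watrous's textbook --- and your argument is precisely the standard treatment found there: the partial-trace computations for the trace-preserving and unital criteria, Choi's theorem (positive decomposition of $J(\mathcal{E})$ vectorized into Kraus operators) for the first bullet, and the adjoint identities $\mathcal{E}^\dagger(X)=\sum_k K_k^\dagger X K_k$ and $\mathcal{E}^\dagger(I_B)=I_A$ for the last two. You have also correctly identified the converse of Choi's theorem as the only step requiring real work; everything else is linear algebra.
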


\section{Results}

In this section, we will present our main results, including the calculation of one-shot coherence dilution $C_{\mathcal{O}}^{(1), \epsilon}$, as well as a comparison with a recent work published on \textit{Physical Review Letters}\cite{zhaoOneShotCoherenceDilution2018}.

\subsection{Analysis of one-shot coherence dilution}

The calculation of the fidelity between two density operators is hard to process in the applicatin of one-shot coherence dilution. Thus, we only consider pure state $\ket{\phi}$ other than a general mixed state $\rho$.
In this case, the fidelity follows $F(\Lambda(\Psi_m), \phi) = \inner{\Lambda(\Psi_m)}{\phi} =  \tr \Lambda(\Psi_m) \phi$.

Here we rewrite the one-shot coherence dilution Eq.~\eqref{def:dilution} as an optimization problem
\begin{equation} \label{eq:dilution-inequality}
\begin{aligned}
	C_{\mathcal{O}}^{(1), \epsilon} (\phi) = \log_2 \min 
    & \; m \in \mathbb{N} \\
	\text{s.t.} 
    & \; F(\Lambda(\Psi_m), \phi) \geq 1 - \epsilon \\
	& \; \Lambda \in \mathcal{O}
\end{aligned} \tag{$P1$}
\end{equation}

Firstly, we will show that the constraint $F(\Lambda(\Psi_m), \phi) \allowbreak \geq 1 - \epsilon$ is unnecessary, and we can have a stronger constraint $F(\Lambda(\Psi_m), \phi) = 1 - \epsilon$.

\begin{proposition} \label{prop:equality-1}
	The optimization problems \eqref{eq:dilution-inequality} and \eqref{eq:dilution-equality} have the same optimal solution.
    \begin{equation} \label{eq:dilution-equality}
    \begin{aligned}
		C_{\mathcal{O}}^{(1), \epsilon} (\phi) = \log_2 \min 
        & \; m \in \mathbb{N} \\
		\mathrm{s.t.} 
        & \; F(\Lambda(\Psi_m), \phi) = 1 - \epsilon \\
		& \; \Lambda \in \mathcal{O}
	\end{aligned} \tag{$P'1$}
    \end{equation}
\end{proposition}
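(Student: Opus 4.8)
The plan is to show that the two problems share the same optimum by proving that each optimal value bounds the other. Two structural facts drive the argument and hold for every class $\mathcal{O} \in \{\mathrm{MIO}, \mathrm{DIO}, \mathrm{IO}, \mathrm{SIO}\}$: the set $\mathcal{O}$ is convex, and, because the target $\phi$ is pure, the fidelity $F(\Lambda(\Psi_m), \phi) = \inner{\Lambda(\Psi_m)}{\phi}$ is an affine function of the channel $\Lambda$. One direction is immediate: any $(\Lambda, m)$ feasible for \eqref{eq:dilution-equality} satisfies $F(\Lambda(\Psi_m),\phi) = 1-\epsilon \geq 1-\epsilon$ and is therefore feasible for \eqref{eq:dilution-inequality}, so the feasible set of \eqref{eq:dilution-equality} is contained in that of \eqref{eq:dilution-inequality} and its minimum over $m$ is no smaller. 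Hence the optimal value of \eqref{eq:dilution-equality} is at least that of \eqref{eq:dilution-inequality}.

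For the reverse bound I would let $m^\star$ be optimal for \eqref{eq:dilution-inequality}, attained by some $\Lambda^\star \in \mathcal{O}$ with $f := F(\Lambda^\star(\Psi_{m^\star}),\phi) \geq 1-\epsilon$. If $f = 1-\epsilon$ then $\Lambda^\star$ is already feasible for \eqref{eq:dilution-equality} at the same $m^\star$, so assume $f > 1-\epsilon$. The idea is to deform $\Lambda^\star$ continuously inside $\mathcal{O}$ until the fidelity drops to exactly $1-\epsilon$. Fix a low-fidelity free operation $\Lambda_0 \in \mathcal{O}$, for instance the replacer channel $\Lambda_0(\rho) = (\tr\rho)\ketbra{i^\star}{i^\star}$ with $i^\star \in \arg\min_i |\langle i|\phi\rangle|^2$; this lies in all four classes, since it sends every state to a fixed incoherent state (hence commutes with $\Delta$, with manifestly incoherence-preserving Kraus operators $\{\ketbra{i^\star}{j}\}_j$). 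Setting $\Lambda_t := (1-t)\Lambda^\star + t\Lambda_0$, convexity gives $\Lambda_t \in \mathcal{O}$ for $t \in [0,1]$, and affineness gives $h(t) := F(\Lambda_t(\Psi_{m^\star}),\phi) = (1-t)f + t g$, where $g := F(\Lambda_0(\Psi_{m^\star}),\phi) = \min_i |\langle i|\phi\rangle|^2$. Since $h$ is continuous with $h(0) = f \geq 1-\epsilon$, as soon as $g \leq 1-\epsilon$ the intermediate value theorem supplies $t^\star \in [0,1]$ with $h(t^\star) = 1-\epsilon$; then $(\Lambda_{t^\star}, m^\star)$ is feasible for \eqref{eq:dilution-equality}, so its optimal value is at most $\log_2 m^\star$, matching \eqref{eq:dilution-inequality}. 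Combining the two bounds yields equality of the optima.

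The step I expect to be the crux is the proviso $g \leq 1-\epsilon$: one must exhibit \emph{some} free operation whose output at the optimal dimension $m^\star$ has fidelity at most $1-\epsilon$ with $\phi$, so that the affine segment $h$ actually crosses the level $1-\epsilon$. By convexity and affineness the achievable fidelities $\{F(\Lambda(\Psi_{m^\star}),\phi) : \Lambda \in \mathcal{O}\}$ form an interval $[f_{\min}(m^\star), f_{\max}(m^\star)]$, and the content of the claim is precisely that $1-\epsilon$ lies in it. The replacer construction only certifies $f_{\min}(m^\star) \leq \min_i |\langle i|\phi\rangle|^2 \leq 1/d$, which settles the case $1-\epsilon \geq 1/d$ (and is trivial when $\phi$ lacks full support, where this bound is $0$). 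To reach the remaining large-$\epsilon$ regime I would drive the fidelity lower by rotating $\Psi_{m^\star}$ toward a state orthogonal to $\phi$ with an incoherent unitary; verifying that such an operation remains in the prescribed class $\mathcal{O}$ and that $f_{\min}(m^\star)$ genuinely drops below $1-\epsilon$ for every admissible $\epsilon$, rather than merely below $1/d$, is the delicate point where I would concentrate the verification.
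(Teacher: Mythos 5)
Your argument is essentially the paper's: the easy direction is the same feasible-set inclusion, and the reverse direction likewise deforms the optimal channel toward a free low-fidelity channel inside $\mathcal{O}$ and invokes continuity of the (affine) fidelity, the only difference being that you take a convex combination with a replacer channel while the paper precomposes $\Lambda$ with a depolarizing channel, so that its fidelity interval bottoms out at $\tr[\Lambda(I_m/m)\phi]$ rather than your $\min_i|\langle i|\phi\rangle|^2$. The endpoint proviso you flag --- that the bottom of the achievable fidelity interval must actually fall below $1-\epsilon$ for the intermediate-value step to apply --- is equally unaddressed in the paper's own proof (which never checks that $\tr[\Lambda(I_m/m)\phi]\le 1-\epsilon$) and only bites for very large error tolerance, so your version is, if anything, slightly more careful on the same approach.
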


\begin{proof}
	To prove this, one direction is that the feasible area of $F(\Lambda(\Psi_m), \phi) \geq 1 - \epsilon$ is larger than $F(\Lambda(\Psi_m), \phi) = 1 - \epsilon$. Therefore, the optimal solution  $p_1^\star \leq p_1'^\star$.
 
	Another direction is that if $F(\Lambda(\Psi_m), \phi) > 1 - \epsilon$, we can add a depolarizing channel $\mathcal{D}(X) = p \cdot \tr X \cdot \frac{I}{m} + (1 - p) X$ before $\Lambda$.
	\begin{align}
		& \inner{\Lambda (\mathcal{D} (\Psi_m))}{\phi} \nonumber\\
		& = \inner{\Psi_m}{\mathcal{D}(\Lambda^\dagger (\phi))} \\
		& = p \cdot \frac{1}{m} \cdot \tr \Lambda^\dagger (\phi) + (1 - p) \cdot \inner{\Psi_m}{\Lambda^\dagger (\phi)} \\
		& \in \left[\frac{\tr \Lambda^\dagger (\phi)}{m}, \inner{\Phi_m}{\Lambda^\dagger (\phi)} \right]
	\end{align}
	The first equality is because the adjoint of the depolarizing channel is itself. That is for any $X, Y \in L(\mathcal{H})$, we have
	\begin{align}
		\inner{\mathcal{D}(X)}{Y} 
		& = \inner{p \cdot \tr X \frac{I}{m} + (1-p) X}{Y} \\
		& = \frac{p}{m} \cdot  \tr X \cdot \tr Y + (1-p) \inner{X}{Y} \\
		& = \inner{X}{p \cdot \tr Y \frac{I}{m} + (1-p) Y} \\
		& = \inner{X}{\mathcal{D}(Y)}
	\end{align}
	Since $\mathcal{D} \in \text{SIO}$, $\Lambda' = \Lambda \circ \mathcal{D} \in \mathcal{O}$. Therefore, for any $\Lambda \in \mathcal{O}$ such that $F(\Lambda(\Psi_m), \phi) > 1 - \epsilon$, there exists $\Lambda' \in \mathcal{O}$ such that $F(\Lambda'(\Psi_m), \phi) = 1 - \epsilon$, which implies $p_1^\star \geq p_1'^\star$. In summary, $p_1^\star = p_1'^\star$. 
\end{proof}

Using the same technique as in paper \cite{fangProbabilisticDistillationQuantum2018,regulaOneShotCoherenceDistillation2018}, we can use the twirling channel $\mathcal{T}$ to simplify constraints. The twirling channel $\mathcal{T} : \mathcal{B}(\mathbb{C}^m) \rightarrow \mathcal{B}(\mathbb{C}^m)$ is defined as
\begin{equation}
	\mathcal{T}(\cdot) := \frac{1}{m!}\sum_{\pi \in S_m} U_\pi \cdot U_\pi^{-1}
\end{equation}
where $\pi$ is a permutation from the symmetric group $S_m$ and $U_\pi$ is the corresponding permutation matrix. Directly from the definition of $\mathcal{T}$, we know that the adjoint of $\mathcal{T}$ is itself, and it has the property that
\begin{equation}
	\mathcal{T} (\ketbra{i}{j}) = \begin{cases}
        I_m / m & , i = j \\
		\frac{1}{m-1} \left( \Psi_m - I_m / m \right) & , i \neq j
	\end{cases}
\end{equation}
Since $\Psi_m = \ketbra{\Psi_m}{\Psi_m}$ and $(I - \Psi_m) / (m-1)$ form an orthogonal basis for the image of $\mathcal{T}$, the image of $\mathcal{T}$ can be written as
\begin{equation}
	\mathcal{T} (\sigma) = a \cdot \Psi_m + b \cdot \frac{I - \Psi_m}{m-1}
\end{equation}
where $a = \inner{\Psi_m}{\mathcal{T}(\sigma)}$ and $b = \inner{I - \Psi_m}{\mathcal{T}(\sigma)}$.

Since $\Psi_m$ is invariant under $\mathcal{T}$, we can add a twirling channel before $\Lambda$.
\begin{align}
	\inner{\Lambda(\Psi_m)}{\phi}
	& = \inner{\Lambda \circ \mathcal{T} (\Psi_m)}{\phi} \\
	& = \inner{\Psi_m}{\mathcal{T} \circ \Lambda^\dagger (\phi)} \\
	& = \inner{\Psi_m}{a \cdot \Psi_m + b \cdot \frac{I - \Psi_m}{m-1}} \\
	& = a = 1 - \epsilon
\end{align}
Therefore, we denote
\begin{equation}
	\Psi_m^{\epsilon, b} = \mathcal{T} \circ \Lambda^\dagger (\phi) = (1 - \epsilon) \Psi_m + b \cdot \frac{I - \Psi_m}{m-1}
\end{equation}
where $b \geq 0$, because $\Lambda^\dagger$ is CP and unital.

Since $\mathcal{T} \in \text{SIO}$, $\Lambda \circ \mathcal{T} \in \mathcal{O}$. Now the problem \eqref{eq:dilution-equality} can be converted to
\begin{equation} \label{eq:dilution-p2}
\begin{aligned}
	C_{c, \mathcal{O}}^{(1), \epsilon} (\phi) = \log_2 \min 
        & \; m \in \mathbb{N} \\
	\text{s.t.} 
        & \; \Lambda^\dagger (\phi) = \Psi_m^{\epsilon, b} \\
	    & \; \Lambda \in \mathcal{O}
\end{aligned} \tag{$P2$}
\end{equation}

To process the constraint $\Lambda \in \mathcal{O}$, we consider the Choi matrix of $\Lambda^\dagger$, which is defined in Eq.~\eqref{ChoiMatrix}.
\begin{equation}\label{ChoiMatrix}
	J(\Lambda^\dagger) = \sum_{i, j} \ketbra{i}{j}_B \otimes \Lambda^\dagger (\ketbra{i}{j})_A
\end{equation}
Then, the Choi matrix of the operation $\mathcal{T} \circ \Lambda^\dagger$ is
\begin{align}
	\Tilde{J} & := J (\mathcal{T} \circ \Lambda^\dagger) \\
	& = \sum_{i,j} \ketbra{i}{j}_B \otimes \mathcal{T} \circ \Lambda^\dagger (\ketbra{i}{j})_A \\
	& = \frac{1}{m!} \sum_{\pi \in S_m} \sum_{i,j} \ketbra{i}{j}_B \otimes U_\pi \left[\Lambda^\dagger(\ketbra{i}{j})_A \right] U_\pi^{-1}  \\
	& = \frac{1}{m!} \sum_{\pi \in S_m} \left( I_B \otimes U_\pi \right) J(\Lambda^\dagger) \left( I_B \otimes U_\pi \right)^{-1} \\
	& = C^T \otimes \Psi_m + D^T \otimes \frac{I_m - \Psi_m}{m-1} \label{eq:choimatrixlast}
\end{align}
The Eq.~\eqref{eq:choimatrixlast} holds because that the operation $\mathcal{T}$ is applied on the system $A$.

Now we analyze the constraints on $\Tilde{J}$. Since $\Lambda \in \mathcal{O}$, $\Lambda$ is CPTP implies that $\Lambda^\dagger$ is CP and unital according to lemma~\ref{lemma:superoperatorProperties}. Here we consider the constraints in \eqref{eq:dilution-p2} one by one as follows.
\begin{itemize}
	\item $\mathcal{T} \circ \Lambda^\dagger$ is CP is equivalent to $\Tilde{J} \geq 0$.  Since $\Psi_m$ and $(I_m - \Psi_m) / (m-1)$ are positive and orthogonal, the constraint is further equivalent to $C \geq 0$ and $D \geq 0$.
	\item $\Lambda^\dagger$ is unital implies that $\mathcal{T}\circ \Lambda^\dagger$ is unital. Then, 
	\begin{align}
		I_A & = \Lambda^\dagger (I_B) \\
		& = \tr_B \Tilde{J} \left( I_B^T \otimes I_A \right) \\
		& = \tr_B \Tilde{J} \\
		& = \tr C \cdot \Psi_m + \tr D \cdot \frac{I - \Psi_m}{m-1} \\
		& = I = \Psi_m + I - \Psi_m
	\end{align}
	Thus, it is obvious to conclude the following relations, 
	\begin{equation}
		\begin{cases}
			\tr C = 1, \\
			\tr D = m - 1.
		\end{cases}
	\end{equation}
	\item The constraint $\Lambda^\dagger (\phi) = \Psi_m^{\epsilon, b}$ implies the following equations.
	\begin{align}
		& (1-\epsilon) \Psi_m + b \cdot \frac{I - \Psi_m}{m-1} \nonumber\\
		& = \tr_B \Tilde{J} \left( \phi^T \otimes I_A \right) \\
		& = \tr_B \left( C^T \phi^T \otimes \Psi_m + D^T \phi^T \otimes \frac{I - \Psi_m}{m-1} \right) \\
		& = \tr C \phi \cdot \Psi_m + \tr D \phi \cdot \frac{I - \Psi_m}{m-1}.
	\end{align}
	Therefore, we have
	\begin{equation}
		\begin{cases}
			\tr C \phi = 1 - \epsilon, \\
			\tr D \phi = b \geq 0.
		\end{cases}
	\end{equation}
	Since $D \geq 0$ and $\phi \geq 0$ imply that $\tr D \phi \geq 0$, we only need to keep the constraint $\tr C \phi = 1 - \epsilon$.\end{itemize}

Now we consider the one-shot dilution under MIO, that is $\mathcal{O} = \mathrm{MIO}$.
$\Lambda \in \text{MIO}$ means $\Lambda(\ketbra{i}{i}) = \Delta\circ \Lambda(\ketbra{i}{i})$ for $i = 0, \ldots, d-1$. 
There is a trick to represent $J(\mathcal{N}^\dagger)$ using $\mathcal{N} : \mathcal{B}(\mathcal{H}_A) \rightarrow \mathcal{B}(\mathcal{H}_B)$.
	\begin{equation}
		J(\mathcal{N}^\dagger) = 
		\sum_{i,j=1}^{d_A} \overline{\mathcal{N}(\ketbra{i}{j})}_B \otimes \ketbra{i}{j}_A
	\end{equation}
Using this trick, we can analyze $\Tilde{J}$.
	\begin{align}
		J(\mathcal{T} \circ \Lambda^\dagger)
		& = \sum_{i,j} \overline{\Lambda(\ketbra{i}{j})}_B \otimes \mathcal{T} (\ketbra{i}{j})_A \\
		& = \sum_{i} \overline{\Lambda(\ketbra{i}{i})} \otimes \frac{I}{m} \\
		& + \sum_{i \neq j} \overline{\Lambda(\ketbra{i}{j})} \otimes \frac{\Psi_m - \frac{I}{m}}{m-1} \nonumber\\
		& = \sum_{i} \overline{\Lambda(\ketbra{i}{i})} \otimes \frac{1}{m} \left(\Psi_m + I - \Psi_m\right) \\
		& + \sum_{i \neq j} \overline{\Lambda(\ketbra{i}{j})} \otimes \frac{1}{m} \left(\Psi_m - \frac{I - \Psi_m}{m-1}\right) \nonumber
	\end{align}
	Therefore, we can calculate $C$ and $D$ by partial trace.
	\begin{align}
		C^T & = \inner{\Psi_m}{\Tilde{J}}_A \\
		& = \frac{1}{m} \sum_i \overline{\Lambda(\ketbra{i}{i})}
		+ \frac{1}{m} \sum_{i \neq j} \overline{\Lambda(\ketbra{i}{j})} \\
		& = \frac{1}{m} \sum_{i,j} \overline{\Lambda(\ketbra{i}{j})}, \\
		D^T & = \inner{I - \Psi_m}{\Tilde{J}}_A \\
		& = \frac{m-1}{m} \sum_i \overline{\Lambda(\ketbra{i}{i})}
		- \frac{1}{m} \sum_{i \neq j} \overline{\Lambda(\ketbra{i}{j})} \\
		& = \sum_{i} \overline{\Lambda(\ketbra{i}{i})} - C^T.
	\end{align}
	$\Lambda(\ketbra{i})$ is diagonal implies $C^T + D^T$ is diagonal, that is $\Delta(C + D) = C + D$.

In summary, combined with all constraints, we have
\begin{equation}
\begin{aligned}
	C_{\text{MIO}}^{(1), \epsilon} (\phi) = \log_2 \min 
    & \; m \in \mathbb{N} \\
	\text{s.t.} 
    & \; C \geq 0, D \geq 0, \\
	& \; \tr C = 1 \\
	& \; \tr D = m-1 \\
	& \; \tr C \phi = 1 - \epsilon \\
	& \; \Delta(C + D) = C + D.
\end{aligned}
\end{equation}

Let $G \equiv C + D$. We have proved the following theorem.

\begin{theorem} \label{theorem:dilution-MIO}
    The one-shot coherence dilution of pure state under MIO is
    \begin{equation} \label{eq:dilution-pure-mio}
	\begin{aligned}
		C_{\mathrm{MIO}}^{(1), \epsilon} (\phi) = \log_2 \min 
        & \; \lceil \tr G \rceil \\
		\mathrm{s.t.} 
        & \;  0 \leq C \leq G \\
		& \; \tr C = 1 \\
		& \; \tr C \phi = 1 - \epsilon \\
		& \; \Delta(G) = G.
	\end{aligned} \tag{$P3$}
    \end{equation}
\end{theorem}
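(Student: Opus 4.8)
The plan is to take the constraint system that has just been assembled for $\mathcal{O} = \mathrm{MIO}$ — the positivity conditions $C \geq 0$, $D \geq 0$, the trace conditions $\tr C = 1$ and $\tr D = m-1$, the fidelity condition $\tr C\phi = 1-\epsilon$, and the dephasing condition $\Delta(C+D) = C+D$ — and to show it is equivalent to \eqref{eq:dilution-pure-mio} under the single substitution $G \equiv C + D$. The first step is purely algebraic. Writing $D = G - C$, the pair $C \geq 0$, $D \geq 0$ becomes the sandwich $0 \leq C \leq G$; the dephasing condition becomes $\Delta(G) = G$; and the two constraints on $C$ carry over unchanged. The only constraint still mentioning the dimension is $\tr D = m-1$, which together with $\tr C = 1$ forces $\tr G = \tr C + \tr D = m$.

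The crux is then to reconcile the discrete objective $\min\{m \in \mathbb{N}\}$ of the original problem with the continuous trace $\tr G$ appearing in \eqref{eq:dilution-pure-mio}, where I replace $m$ by $\lceil \tr G\rceil$ and drop the explicit integrality of $m$. Let $m^\star$ denote the integer optimum and $t^\star := \min \tr G$ the optimum of the relaxed problem (in which $\tr G$ may be any real number $\geq 1$, since $D \geq 0$ gives $\tr G = 1 + \tr D \geq 1$). One inequality is immediate: every integer-feasible triple $(C,D)$ is also relaxed-feasible with $\tr G = m$, so $t^\star \leq m^\star$, and since $m^\star \in \mathbb{N}$ this yields $\lceil t^\star\rceil \leq m^\star$.

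For the reverse inequality I would use a padding argument to round the trace up to an integer without disturbing any other constraint. Given a relaxed-feasible $(C,G)$ with $\tr G = t$, I add the nonnegative diagonal operator $(\lceil t\rceil - t)\ketbra{0}{0}$ to $G$. This leaves $C$ — and hence $\tr C$ and $\tr C\phi$ — untouched; it preserves $\Delta(G) = G$ because the added term is diagonal; it keeps $D = G - C \geq 0$ and $C \leq G$ since we only add a positive operator; and it raises the trace exactly to the integer $\lceil t\rceil$. Hence the integer $m = \lceil t\rceil$ is feasible for the original problem, giving $m^\star \leq \lceil t^\star\rceil$. Combined with the previous step this gives $m^\star = \lceil t^\star\rceil$, and since $\lceil \cdot \rceil$ is non-decreasing the minimum may be pulled inside, so $\min \lceil \tr G\rceil = \lceil \min \tr G\rceil = m^\star$, which is exactly the objective of \eqref{eq:dilution-pure-mio}.

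The main obstacle is precisely this integer-versus-continuum reconciliation: one must check that relaxing $m$ to a continuous trace and rounding up does not alter the optimal value. The padding construction is the key device, and its correctness rests on the observation that the diagonal direction $\ketbra{0}{0}$ is \emph{free} — it can be absorbed into $D$ without violating positivity, the dephasing-invariance of $G$, or any of the constraints on $C$ — so $\tr G$ can be increased continuously to the next integer at no cost to feasibility.
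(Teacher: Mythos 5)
Your proposal is correct and follows essentially the same route as the paper: the paper's own ``proof'' of this theorem is precisely the substitution $G \equiv C+D$ applied to the constraint system $C \geq 0$, $D \geq 0$, $\tr C = 1$, $\tr D = m-1$, $\tr C\phi = 1-\epsilon$, $\Delta(C+D)=C+D$ obtained from the twirling and Choi-matrix analysis, and your first step reproduces that faithfully. The one place you go beyond the paper is the ceiling: the paper simply writes $\lceil \tr G\rceil$ in the objective without comment, whereas you justify that relaxing the integrality of $m$ and rounding the trace up does not change the optimum, via the padding $G \mapsto G + (\lceil t\rceil - t)\ketbra{0}{0}$. That argument is sound --- the added term is positive and diagonal, so it preserves $0 \leq C \leq G$, $\Delta(G)=G$, and both trace conditions on $C$ --- and it fills a small step the paper leaves implicit. (To pull the ceiling through the minimum you also need the relaxed optimum to be attained; this follows from compactness, since the feasible set is closed and one may restrict to $\tr G \leq d$ using the feasible point $G = I$.) Like the paper, you take as given the reduction from the channel-based definition of $C^{(1),\epsilon}_{\mathrm{MIO}}$ to the $(m,C,D)$ system --- including the converse direction, that any feasible $(C,D)$ with $\tr D = m-1$ is realized by an actual MIO channel --- which is where the substantive work of the theorem lives; that is consistent with how the theorem is positioned in the text, but worth keeping in mind if the proof is read in isolation.
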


Now we consider the one-shot coherence dilution under DIO.
Based on $\Lambda \in \text{MIO}$, $\Lambda \in \text{DIO}$ adds a constraint $\Delta \circ \Lambda(\ketbra{i}{j}) = 0$ if $i \neq j$. According to the above representations of $C$ and $D$, we know that
\begin{equation}
	m C^T - (C^T + D^T) = \sum_{i \neq j} \overline{\Lambda(\ketbra{i}{j})}.
\end{equation}
Therefore, $\Delta[m C - (C + D)] = 0$.

In summary, we have the following theorem for the one-shot coherence dilution under DIO.

\begin{theorem} \label{theorem:dilution-DIO}
	The one-shot coherence dilution of pure state under DIO is
	\begin{equation} \label{eq:dilution-pure-dio}
	\begin{aligned}
        C_{\mathrm{DIO}}^{(1), \epsilon} (\phi) = \log_2 \min 
        & \; m \in \mathbb{N} \\
		\mathrm{s.t.} 
        & \;  0 \leq C \leq G \\
		& \; \tr C = 1 \\
		& \; \tr C \phi = 1 - \epsilon \\
		& \; G = \Delta(G) = m \Delta(C)
	\end{aligned} \tag{$P4$}.
    \end{equation}
\end{theorem}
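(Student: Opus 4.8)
The plan is to read off Theorem~\ref{theorem:dilution-DIO} as a short refinement of the chain of reductions that produced Theorem~\ref{theorem:dilution-MIO}, exploiting the fact that DIO is precisely MIO supplemented by the single condition $\Delta \circ \Lambda(\ketbra{i}{j}) = 0$ for $i \neq j$. Every earlier step transfers unchanged: the restriction to pure targets, the passage from the fidelity inequality to equality via Proposition~\ref{prop:equality-1}, the insertion of the twirling channel $\mathcal{T}$ (legitimate because $\mathcal{T} \in \text{SIO} \subseteq \text{DIO}$, so $\Lambda \circ \mathcal{T}$ stays in the class), and the resulting block form $\tilde{J} = C^T \otimes \Psi_m + D^T \otimes \frac{I_m - \Psi_m}{m-1}$. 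The CP, unital, and fidelity requirements again become $0 \leq C$, $0 \leq D$, $\tr C = 1$, $\tr D = m-1$, $\tr C\phi = 1-\epsilon$, and MIO membership again gives $\Delta(G) = G$ with $G = C+D$. So the only new work is to encode the extra DIO condition and then simplify.

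First I would reuse the representations $C^T = \frac{1}{m}\sum_{i,j}\overline{\Lambda(\ketbra{i}{j})}$ and $D^T = \sum_i\overline{\Lambda(\ketbra{i}{i})} - C^T$ established in the MIO analysis to form $mC^T - (C^T + D^T) = \sum_{i \neq j}\overline{\Lambda(\ketbra{i}{j})}$, the identity displayed just before the theorem. Applying the dephasing channel $\Delta$ to both sides, and using that $\Delta$ commutes with entrywise complex conjugation (both act in the fixed real computational basis), the right-hand side becomes $\sum_{i\neq j}\overline{\Delta\circ\Lambda(\ketbra{i}{j})}$, which vanishes by the DIO condition. Hence $\Delta[mC - (C+D)] = 0$; combining this with $\Delta(G) = G$ and $G = C+D$ rearranges to $G = \Delta(G) = m\Delta(C)$, the extra line of program~\eqref{eq:dilution-pure-dio}. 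Taking traces of $G = m\Delta(C)$ and using $\tr\Delta(C) = \tr C = 1$ gives $\tr G = m$, which is why the objective is stated directly as the integer $m$ rather than $\lceil \tr G \rceil$; the box constraint $0 \leq C \leq G$ again packages $C \geq 0$ together with $D = G - C \geq 0$, while $\tr C\phi = 1-\epsilon$ is retained and $\tr D\phi \geq 0$ is automatic.

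The main obstacle is the converse, \emph{sufficiency}: showing that any $C, D$ feasible for~\eqref{eq:dilution-pure-dio} is actually realized by some genuine DIO map attaining fidelity $1-\epsilon$, so that the program value equals $C_{\text{DIO}}^{(1),\epsilon}(\phi)$ rather than merely lower-bounding it. The key observation that makes this work is that after twirling we may assume $\Lambda$ is itself twirl-symmetric, and for such a map all off-diagonal images $\Lambda(\ketbra{i}{j})$ with $i \neq j$ coincide, each being $\Lambda$ applied to the common value $\mathcal{T}(\ketbra{i}{j})$. Consequently the single summed constraint $\Delta[\sum_{i\neq j}\Lambda(\ketbra{i}{j})] = 0$ that $\Delta[mC - G] = 0$ encodes is equivalent to the full family $\Delta\circ\Lambda(\ketbra{i}{j}) = 0$ for every $i \neq j$, so the map reconstructed from $\tilde{J}$ genuinely lands back in DIO. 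I would finally note that, unlike the MIO program, the coupling $G = m\Delta(C)$ is bilinear in the variables $m$ and $C$, so~\eqref{eq:dilution-pure-dio} is not a semidefinite program, matching the paper's description of a ``similar but not SDP'' form.
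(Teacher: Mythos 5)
Your proposal follows essentially the same route as the paper: carry over the MIO reduction wholesale and translate the extra DIO condition $\Delta\circ\Lambda(\ketbra{i}{j})=0$ for $i\neq j$ into $\Delta[mC-(C+D)]=0$ via the identity $mC^{T}-(C^{T}+D^{T})=\sum_{i\neq j}\overline{\Lambda(\ketbra{i}{j})}$, which is exactly the paper's derivation. You go somewhat further than the paper by explicitly arguing the converse (that a feasible $C,G$ is realized by a genuine DIO map, using that the twirled map sends every off-diagonal $\ketbra{i}{j}$ to the same operator), a point the paper leaves implicit; this is a welcome tightening rather than a different approach.
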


Thm.~\ref{theorem:dilution-MIO} gives a SDP form of $C^{(1), \epsilon}_{\text{MIO}}$, but in Thm.~\ref{theorem:dilution-DIO} $C^{(1), \epsilon}_{\text{DIO}}$ has a quadratic constraint.

The definition of one-shot coherence dilution has an integer requirement. Since the rounding up function is troublesome, here we rewrite the aforementioned formulas without rounding up.
\begin{equation} \label{eq:dilution-pure-mio-without-rounding-up}
\begin{aligned}
		\Tilde{C}_{\mathrm{MIO}}^{(1), \epsilon} (\phi) = \log_2 \min 
        & \; \tr G \\
		\mathrm{s.t.} 
        & \;  0 \leq C \leq G \\
		& \; \tr C = 1 \\
		& \; \tr C \phi = 1 - \epsilon \\
		& \; \Delta(G) = G,
\end{aligned} \tag{$P'3$}
\end{equation}
and
\begin{equation} \label{eq:dilution-pure-dio-without-rounding-up}
\begin{aligned}
	\Tilde{C}_{\mathrm{DIO}}^{(1), \epsilon} (\phi) = \log_2 \min 
    & \; m \\
	\mathrm{s.t.} 
    & \;  0 \leq C \leq G \\
	& \; \tr C = 1 \\
	& \; \tr C \phi = 1 - \epsilon \\
	& \; G = \Delta(G) = m \Delta(C)
\end{aligned} \tag{$P'4$}
\end{equation}

\subsection{Dual form of optimization}

Note the fact that the dual form of an optimization problem is usually easier to solve and thereby having practical significance.
In this subsection, we calculate the dual forms of \eqref{eq:dilution-pure-mio-without-rounding-up} and \eqref{eq:dilution-pure-dio-without-rounding-up}. The skills for processing dual forms can be found in the popular textbook by Boyd and Vandenberghe \cite{boyd2004convex}.

Consider the Lagrange of \eqref{eq:dilution-pure-mio-without-rounding-up} 
\begin{align}
	L & = \inner{G}{I} - \inner{C}{\Lambda_1} + \inner{C - G}{\Lambda_2} \\
	& + a \left( \inner{C}{I} - 1 \right) \notag \\
	& + b \left(  \inner{C}{\phi} - (1 - \epsilon) \right) \notag \\
	& + \sum_{i \neq j} c_{i,j} \inner{G}{E_{i,j}} \notag \\
	& = \inner{G}{I - \Lambda_2 + \sum_{i \neq j} c_{i,j} E_{i,j}} \\
	& + \inner{C}{-\Lambda_1 + \Lambda_2 + a I + b \phi} \notag \\
	& - a - b(1 - \epsilon). \notag
\end{align}

Then, consider the dual function
\begin{align}
	g & = \inf_{C, G} L = - a - b(1 - \epsilon) \\
	& \text{if} \quad \begin{cases}
		I - \Lambda_2 + \sum_{i \neq j} c_{i,j} E_{i,j} = 0 \\
		-\Lambda_1 + \Lambda_2 + a I + b \phi = 0.
	\end{cases}
\end{align}

Therefore, the dual form of \eqref{eq:dilution-pure-mio-without-rounding-up}  is
\begin{equation} \label{eq:dual-MIO}
\begin{aligned}
	\max & \quad -a - b(1-\epsilon) \\
	\text{s.t.} & \quad \Lambda_1 \geq 0, \Lambda_2 \geq 0 \\
	& \quad I - \Lambda_2 + \sum_{i \neq j} c_{i,j} E_{i,j} = 0 \\
	& \quad -\Lambda_1 + \Lambda_2 + a I + b \phi = 0.
\end{aligned}
\end{equation}Since the dual form Eq.~\eqref{eq:dual-MIO} has a strict solution of $a = b = c = 0$, the strong duality holds according to the Slater's condition.

Although the \eqref{eq:dilution-pure-dio-without-rounding-up} is not a SDP, we can still calculate the dual form.
Rewrite the \eqref{eq:dilution-pure-dio-without-rounding-up} as
\begin{equation}
\begin{aligned}
	\min & \quad \inner{G}{I} \\
	\text{s.t.}
	& \quad -C \leq 0 \\
	& \quad C - G \leq 0 \\
	& \quad \inner{C}{I} - 1 = 0 \\
	& \quad \inner{C}{\phi} - (1-\epsilon) = 0 \\
	& \quad \inner{G}{E_{i,j}} = 0, \quad \forall i \neq j \\
	& \quad \inner{G - m C}{E_{i,i}} = 0, \quad \forall i
\end{aligned}
\end{equation}

Consider the Lagrange
\begin{align}
	L & = \inner{G}{I} - \inner{C}{\Lambda_1} + \inner{C - G}{\Lambda_2} \\
	& + a (\inner{C}{I} - 1) \notag \\
	& + b (\inner{C}{\phi} - (1 - \epsilon)) \notag \\
	& + \sum_{i \neq j} c_{i,j} \inner{G}{E_{i,j}} \notag \\
	& + \sum_i d_i \inner{G - m C}{E_{i,i}} \notag \\
	& = \inner{G}{I - \Lambda_2 + \sum_{i \neq j} c_{i,j} E_{i,j} + \sum_i d_i E_{i,i}} \\
	& + \inner{C}{-\Lambda_1 + \Lambda_2 + a I + b \phi - m \sum_i d_i E_{i,i}} \notag \\
	& - a - b(1 - \epsilon). \notag
\end{align}

Firstly, consider $\inf_G L$,
\begin{align}
	& \inf_G L = - a - b(1 - \epsilon) \\
    & + \inner{C}{-\Lambda_1 + (a+1) I + b \phi + \sum_{i \neq j} c_{i,j} E_{i,j} } \nonumber \\
    & + \inner{C}{(1-m) \sum_i d_i E_{i,i}} \nonumber \\
	\text{if} &  \quad I - \Lambda_2 + \sum_{i \neq j} c_{i,j} E_{i,j} + \sum_i d_i E_{i,i} = 0.
\end{align}

Then consider 
\begin{align}
	& \inf_{m, C} \inf_{G} L \nonumber \\
	& = \inf_{m,C} -a-b(1-\epsilon) + \inner{C}{R} + \inner{C}{(1-m) Q} \\
	& = \begin{cases}
		-\infty , & \quad \text{otherwise} \\
		-a-b(1-\epsilon), & \quad R = Q = 0
	\end{cases}
\end{align}

Therefore, the dual form of \eqref{eq:dilution-pure-dio-without-rounding-up} is
\begin{equation}
\begin{aligned}
	\max & \quad -a-b(1-\epsilon) \\
	\text{s.t.} 
	& \quad \sum_i d_i E_{i,i} = 0 \\
	& \quad \Lambda_2 = I + \sum_{i \neq j} c_{i,j} E_{i,j} + \sum_i d_i E_{i,i} \geq 0 \\
	& \quad \Lambda_1 = (a+1) I + b \phi + \sum_{i \neq j} c_{i,j} E_{i,j} \geq 0
\end{aligned}
\end{equation}
Rewrite it as
\begin{equation} \label{eq:dual-DIO}
\begin{aligned}
	\max & \quad a + b(1-\epsilon) \\
	\text{s.t.}
	& \quad \sum_{i \neq j} c_{i,j} E_{i,j} \leq I \\
	& \quad a I + b \phi + \sum_{i \neq j} c_{i,j} E_{i,j} \leq I.
\end{aligned}
\end{equation}
Eq.~\eqref{eq:dual-DIO} is just the same as Eq.~\eqref{eq:dual-MIO}. We find that \eqref{eq:dilution-pure-mio-without-rounding-up} and \eqref{eq:dilution-pure-dio-without-rounding-up} have the same dual form. We have proved that the strong duality holds for MIO, and according to section~\ref{sec:numerical-experiment} we know there is a gap between MIO and DIO. Therefore, the strong duality doesn't hold for DIO.

\subsection{Compared with previous work}

Zhao and his collaborators used the method of inequalities to provide the upper and lower bounds for $C_{\mathcal{O}}^{(1), \epsilon}$. Here we conclude their main results as the following theorem~\ref{theorem:zhao2018}.
\begin{theorem}[Zhao (2018) \cite{zhaoOneShotCoherenceDilution2018}] \label{theorem:zhao2018}
    For any state $\rho \in \mathbb{D}$ and $\epsilon \geq 0$, its one-shot coherence dilution under MIO and DIO are bounded by the following inequalities,
    \begin{align}
        C_{\max}^\epsilon (\rho) & \leq
        C_{\mathrm{MIO}}^{(1), \epsilon} (\rho) \leq
        C_{\max}^\epsilon (\rho) + 1, \\
        C_{\max, \Delta}^\epsilon (\rho) & \leq
        C_{\mathrm{DIO}}^{(1), \epsilon} (\rho) \leq
        C_{\max, \Delta}^\epsilon (\rho) + 1,
    \end{align}
    respectively.
\end{theorem}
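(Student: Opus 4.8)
The plan is to prove the two double-inequalities by the standard converse-plus-achievability scheme, treating MIO and DIO in parallel. Throughout I write $D_{\max}(\rho\|\sigma) = \log_2\min\{\lambda : \rho \le \lambda\sigma\}$, and I take the two monotones underlying the statement to be $C_{\max}(\rho) = \min_{\sigma\in\mathcal{I}} D_{\max}(\rho\|\sigma)$ and $C_{\max,\Delta}(\rho) = D_{\max}(\rho\|\Delta(\rho))$, with their smoothed versions obtained by minimizing over all $\omega$ with $F(\omega,\rho)\ge 1-\epsilon$. Two facts drive everything. First, $C_{\max}(\Psi_m) = C_{\max,\Delta}(\Psi_m) = \log_2 m$; this follows because $\Delta(\Psi_m) = I/m$ and $\langle\Psi_m|\sigma|\Psi_m\rangle = 1/m$ for every normalized incoherent $\sigma$, which forces $\Psi_m\le\lambda\sigma$ to require $\lambda\ge m$, with $\sigma=I/m$ attaining equality. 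Second, $C_{\max}$ is monotone under MIO and $C_{\max,\Delta}$ is monotone under DIO, each inheriting monotonicity of $D_{\max}$ under CPTP maps together with the defining property of the class (MIO sends $\mathcal{I}$ into $\mathcal{I}$; DIO commutes with $\Delta$).

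For the two lower bounds I would fix an optimal dilution $\Lambda\in\mathcal{O}$ achieving $m$, and set $\omega := \Lambda(\Psi_m)$, which satisfies $F(\omega,\rho)\ge 1-\epsilon$. Monotonicity gives $C_{\max}(\omega)\le C_{\max}(\Psi_m) = \log_2 m$ in the MIO case and $C_{\max,\Delta}(\omega)\le\log_2 m$ in the DIO case. Since $\omega$ lies in the fidelity ball around $\rho$, the corresponding smoothed quantity is no larger, and minimizing over admissible $\Lambda$ and $m$ yields $C_{\max}^\epsilon(\rho)\le C_{\mathrm{MIO}}^{(1),\epsilon}(\rho)$ and $C_{\max,\Delta}^\epsilon(\rho)\le C_{\mathrm{DIO}}^{(1),\epsilon}(\rho)$. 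This half is routine.

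For the achievability (upper bounds), the substantive step, I would exhibit an explicit channel. Let $\omega$ attain the smoothed quantity and write $\omega\le\lambda\sigma$ with $\sigma\in\mathcal{I}$ and $\lambda=2^{C_{\max}^\epsilon(\rho)}$; in the DIO case take specifically $\sigma=\Delta(\omega)$ and $\lambda=2^{C_{\max,\Delta}^\epsilon(\rho)}$. Set $m=\lceil\lambda\rceil$ and define
\[ \Lambda(X) := \langle\Psi_m|X|\Psi_m\rangle\,\omega + \tr\!\big[(I-\Psi_m)X\big]\,\frac{m\sigma-\omega}{m-1}. \]
This is manifestly completely positive, and it is trace preserving because both $\omega$ and $(m\sigma-\omega)/(m-1)$ have unit trace; positivity of the second prepared operator uses exactly $\omega\le\lambda\sigma\le m\sigma$. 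One checks $\Lambda(\Psi_m)=\omega$, so the fidelity constraint holds. The key verification is class membership: because $\langle\Psi_m|\delta|\Psi_m\rangle=1/m$ for every incoherent $\delta$, every incoherent input is mapped to the fixed incoherent state $\sigma$, so $\Lambda\in\mathrm{MIO}$; for DIO one computes $\Delta\big(\Lambda(|i\rangle\!\langle j|)\big)=0$ for $i\ne j$ using $\sigma=\Delta(\omega)$ and $\Delta^2=\Delta$, which is the equivalent characterization of DIO recorded in the Preliminaries. Finally $\log_2 m=\log_2\lceil\lambda\rceil\le\log_2(2\lambda)=C_{\max}^\epsilon(\rho)+1$ (and likewise for DIO), since $\lambda\ge1$; this is precisely where the additive constant $1$ enters.

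The step I expect to be the main obstacle is verifying membership in the operation class together with the positivity constraint $m\sigma-\omega\ge0$ for a genuinely optimal smoothing state $\omega$: the MIO construction works for \emph{any} incoherent reference $\sigma$, whereas the DIO construction rigidly forces $\sigma=\Delta(\omega)$, and it is this rigidity that heuristically explains the gap between $C_{\mathrm{MIO}}^{(1),\epsilon}$ and $C_{\mathrm{DIO}}^{(1),\epsilon}$ observed numerically. I would also dispatch the degenerate case $\lambda=1$ (equivalently $m=1$, where the $(m-1)$ denominator is vacuous) separately by using the constant preparation channel $X\mapsto\tr[X]\,\omega$, which is both MIO and DIO since then $\omega$ is incoherent.
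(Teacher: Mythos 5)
This theorem is stated in the paper purely as a citation of Zhao \emph{et al.}~\cite{zhaoOneShotCoherenceDilution2018}; the paper itself contains no proof to compare against, so your argument can only be judged on its own merits and against the original reference. On that basis it is correct and is essentially the standard argument: the converse follows from data processing for $D_{\max}$ together with $C_{\max}(\Psi_m)=C_{\max,\Delta}(\Psi_m)=\log_2 m$, and achievability follows from the explicit measure-and-prepare channel $\Lambda(X)=\langle\Psi_m|X|\Psi_m\rangle\,\omega+\tr[(I-\Psi_m)X]\,(m\sigma-\omega)/(m-1)$, whose MIO/DIO membership you verify correctly (every incoherent input is sent to the fixed incoherent state $\sigma$, and with $\sigma=\Delta(\omega)$ one gets $\Lambda(\ketbra{i}{j})=(\omega-\Delta(\omega))/(m-1)$ for $i\neq j$, which is annihilated by $\Delta$, while $\Lambda(\ketbra{i}{i})=\Delta(\omega)$ is diagonal). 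Two small points worth making explicit: the DIO characterization used in the Preliminaries ($\Delta\circ\Lambda(\ketbra{i}{j})=0$ for $i\neq j$) must be supplemented by the requirement that diagonal inputs map to diagonal outputs, which your construction does satisfy since $\Lambda(\ketbra{i}{i})=\Delta(\omega)$; and the bound $\log_2\lceil\lambda\rceil\le\log_2\lambda+1$ indeed needs $\lambda\ge1$, which holds because $C_{\max}^{\epsilon}\ge0$ for normalized states.
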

The quantities presented in Thm.~\ref{theorem:zhao2018} are defined as follows:
\begin{definition}
    \begin{align}
        C_{\max} (\rho)
            & := \min_{\delta \in \mathcal{I}} D_{\max}(\rho \parallel \delta) \\
            & = \log_2 \min \left\lbrace \lambda \mid \delta \in \mathcal{I}, \rho \leq \lambda \delta \right\rbrace, \\
        C_{\max, \Delta} (\rho)
            & := D_{\max} (\rho \parallel \Delta(\rho)) \\
            & = \log_2 \min \left\lbrace \lambda \mid \rho \leq \lambda \Delta(\rho) \right\rbrace, \\
        C_{\max}^\epsilon (\rho)
            & := \min_{\rho' : F(\rho, \rho') \geq 1-\epsilon} C_{\max} (\rho'), \label{eq:ecmax} \\
        C_{\max, \Delta}^\epsilon (\rho)
            & := \min_{\rho' : F(\rho, \rho') \geq 1-\epsilon} C_{\max, \Delta} (\rho'). \label{eq:ecmaxd}
    \end{align}
\end{definition}
Eric Chitambar has proved that $C_{\max}$ is a coherence monotone under MIO and $C_{\max, \Delta}$ is a coherence monotone under DIO \cite{chitambarComparisonIncoherentOperations2016}.

Here we compare our results Thm.~\ref{theorem:dilution-MIO} and Thm.~\ref{theorem:dilution-DIO} with inequalities Thm.~\ref{theorem:zhao2018}.

Firstly, we rewrite $C_{\max}^\epsilon$ as an optimization problem.
\begin{equation}
\begin{aligned}
    C_{\max}^\epsilon (\phi)
        = \log_2 \min 
        & \; \lambda \\
        \text{s.t.} 
        & \; \rho \in \mathbb{D}, \delta \in \mathcal{I} \\
        & \; \tr (\phi \rho) \geq 1 - \epsilon \\
        & \; 0 \leq \rho \leq \lambda \delta .
\end{aligned}
\end{equation}
Like Prop.~\ref{prop:equality-1}, we can use an equality instead of the inequality as the following proposition.
\begin{proposition}
The quantity defined in Eq.~\eqref{eq:ecmax} can be strictly written as the following form,
\begin{equation} \label{eq:smoothed-c-max-pure}
    \begin{aligned}
        C_{\max}^\epsilon (\phi)
            = \log_2 \min 
            & \; \lambda \\
            \mathrm{s.t.} 
            & \; \rho \in \mathbb{D}, \delta \in \mathcal{I} \\
            & \; \tr (\phi \rho) = 1 - \epsilon \\
            & \; 0 \leq \rho \leq \lambda \delta. 
    \end{aligned}
\end{equation}
\end{proposition}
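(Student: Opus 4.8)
The plan is to mirror the structure of the proof of Proposition~\ref{prop:equality-1}. One inclusion is immediate: every triple $(\rho,\delta,\lambda)$ feasible for the equality problem \eqref{eq:smoothed-c-max-pure} is also feasible for the inequality version that defines $C_{\max}^\epsilon$ in \eqref{eq:ecmax}, so minimizing $\lambda$ over the smaller (equality) feasible set gives an optimum at least as large as the inequality optimum. Hence it suffices to prove the reverse: starting from any $(\rho,\delta,\lambda)$ feasible for the inequality version with $\tr(\phi\rho)>1-\epsilon$, I will manufacture a triple feasible for the equality version whose objective $\lambda$ is no larger.

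The key device is a convex-mixing/intermediate-value argument. First I would establish the sub-lemma that mixing with an incoherent state cannot increase $C_{\max}$: if $\rho\le\lambda\delta$ with $\delta\in\mathcal{I}$ and $\delta_0\in\mathcal{I}$, then for $t\in[0,1]$
\[
(1-t)\rho+t\delta_0 \;\le\; (1-t)\lambda\delta+t\delta_0 \;=\; \lambda'\,\delta',
\]
where $\delta'=[(1-t)\lambda\delta+t\delta_0]/\lambda'$ is incoherent and $\lambda'=(1-t)\lambda+t\le\lambda$ (using $\lambda\ge 1$, which follows by taking traces in $\rho\le\lambda\delta$). Thus $\rho_t:=(1-t)\rho+t\delta_0$ stays feasible for the inequality problem with objective $\le\lambda$. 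Next, choosing $\delta_0$ to be a referenced-basis state of smallest overlap with $\phi$, the scalar map $t\mapsto\tr(\phi\rho_t)=(1-t)\tr(\phi\rho)+t\,\tr(\phi\delta_0)$ is continuous and decreasing; provided $\tr(\phi\delta_0)\le 1-\epsilon$, the intermediate value theorem yields $t^\star$ with $\tr(\phi\rho_{t^\star})=1-\epsilon$, and then $\rho'=\rho_{t^\star}$ together with the corresponding $(\delta',\lambda')$ is feasible for \eqref{eq:smoothed-c-max-pure} with $\lambda'\le\lambda$, closing the loop.

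I expect the main obstacle to be precisely the proviso $\tr(\phi\delta_0)\le 1-\epsilon$. Since $\min_{\delta_0\in\mathcal{I}}\tr(\phi\delta_0)=\min_i|\langle i|\phi\rangle|^2$, the mixing path can only drive the overlap down as far as the least basis-state weight of $\phi$. In the regime $1-\epsilon\ge\min_i|\langle i|\phi\rangle|^2$ this is harmless and the argument goes through verbatim. The delicate case is $1-\epsilon<\min_i|\langle i|\phi\rangle|^2$, where no incoherent state lies within the target fidelity ball, so pushing the overlap down to exactly $1-\epsilon$ forces mixing along a coherent direction orthogonal to $\phi$, which can \emph{raise} $C_{\max}$ and thereby break the monotonicity step. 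Pinning down this edge case cleanly is the crux: I would either restrict to the physically relevant small-$\epsilon$ regime where the proviso holds automatically, or identify the explicit hypothesis on $\epsilon$ and $\phi$ under which the equality reformulation remains valid, since a naive application of the mixing argument does not cover it.
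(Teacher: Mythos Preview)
Your approach is essentially the paper's: mix the optimizer $\rho^\star$ convexly toward an incoherent state (the paper uses the depolarizing channel, i.e.\ the maximally mixed state, rather than a single basis state), invoke $\lambda^\star\ge 1$ to see that the objective can only decrease along the mixture, and choose the mixing parameter so that the fidelity hits $1-\epsilon$ exactly. The large-$\epsilon$ edge case you flag is simply not addressed in the paper's proof either (it tacitly needs $1-\epsilon\ge 1/d$ for the depolarizing path to reach the target), so your extra caution is justified rather than a shortfall relative to the paper.
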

\begin{proof}
    Suppose the inequality is strict.
    Since $C_{\max}$ is monotone under MIO, then $C_{\max}^\epsilon \geq 0$ and $\lambda \geq 1$.
    If there exists $\rho^\star \in \mathbb{D}$, $\delta^\star \in \mathcal{I}$, $\lambda^\star \geq 1$ such that $F(\phi, \rho^\star) > 1-\epsilon$, $\rho^\star \leq \lambda^\star \delta^\star$ and
    the optimal solution is $C_{\max}^\epsilon (\phi) = \log_2 \lambda^\star$. We use the depolarizing channel to get 
    \begin{equation}
        \rho' = \mathcal{D}(\rho^\star) = p \frac{I}{m} + (1-p) \rho^\star.
    \end{equation}
    Consider the fidelity between $\rho'$ and $\phi$.
    \begin{align}
        F(\phi, \rho')
        & = \tr \left[ \phi \mathcal{D}(\rho^\star) \right] \\
        & = p \cdot \frac{1}{m} + (1-p) \cdot \tr \phi \rho^\star.
    \end{align}
    Since $\tr \phi \rho^\star > 1-\epsilon$, we can choose the proper $p$ such that $\tr \phi \rho' = 1-\epsilon$. Then consider the inequality.
    \begin{align}
        \rho'
        & = p \frac{I}{m} + (1-p) \rho^\star \\
        & \leq p \frac{I}{m} + (1-p) \lambda^\star \delta^\star \\
        & = \lambda' \delta'.
    \end{align}
    for some $\delta' \in \mathcal{I}$. And it's easy to check
    \begin{equation}
        \lambda' = \tr \left[p \frac{I}{m} + (1-p) \lambda^\star \delta^\star \right] = 1 \cdot p + \lambda^\star \cdot (1-p).
    \end{equation}
    Further, $\lambda^\star \geq 1$ implies that $\lambda' \leq \lambda^\star$. Therefore, we can always achieve the optimal solution by using $F(\phi, \rho) = 1-\epsilon$.
\end{proof}

Similarly, we can get a strict version of $C_{\max, \Delta}^\epsilon$, which is concluded in Proposition~\ref{pro:cmaxd}.
\begin{proposition}\label{pro:cmaxd}
The quantity defined in Eq.~\eqref{eq:ecmaxd} can be strictly written as the following form,
\begin{equation}
    \begin{aligned} \label{eq:smoothed-c-max-delta-pure}
        C_{\max, \Delta}^\epsilon (\phi)
        = \log_2 \min 
        & \; \lambda \\
        \mathrm{s.t.}
        & \; \rho \in \mathbb{D} \\
        & \; \tr \phi \rho = 1-\epsilon \\
        & \; 0 \leq \rho \leq \lambda \Delta(\rho).
    \end{aligned}
\end{equation}
    
\end{proposition}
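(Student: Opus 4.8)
The plan is to establish the two opposite inequalities between $C_{\max,\Delta}^\epsilon(\phi)$ and the optimum of \eqref{eq:smoothed-c-max-delta-pure}, mirroring the argument just used for $C_{\max}^\epsilon$ in the preceding proposition. One direction is immediate: any $\rho$ admissible for the equality-constrained program, i.e. with $\tr\phi\rho = 1-\epsilon$, is in particular admissible for the program defining $C_{\max,\Delta}^\epsilon$ with the relaxed constraint $\tr\phi\rho \ge 1-\epsilon$. The equality program therefore minimizes over a smaller feasible set, so its optimal $\lambda$ can only be larger; this gives $C_{\max,\Delta}^\epsilon(\phi) \le \log_2\lambda_{\mathrm{eq}}$, where $\lambda_{\mathrm{eq}}$ denotes the optimum of \eqref{eq:smoothed-c-max-delta-pure}.

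For the reverse inequality I would take an optimizer $(\rho^\star,\lambda^\star)$ of the relaxed program, so that $0 \le \rho^\star \le \lambda^\star\Delta(\rho^\star)$, $\tr\phi\rho^\star \ge 1-\epsilon$, and $\log_2\lambda^\star = C_{\max,\Delta}^\epsilon(\phi)$. If $\tr\phi\rho^\star = 1-\epsilon$ there is nothing to do, so assume $\tr\phi\rho^\star > 1-\epsilon$. Taking the trace of $\rho^\star \le \lambda^\star\Delta(\rho^\star)$ and using $\tr\rho^\star = \tr\Delta(\rho^\star) = 1$ already forces $\lambda^\star \ge 1$. As in the $C_{\max}$ case I would then apply the depolarizing channel $\mathcal{D}(X) = p\,\tr X\cdot I/d + (1-p)X$ to $\rho^\star$, producing $\rho' = p\,I/d + (1-p)\rho^\star$; since $\tr\phi\,(I/d) = 1/d$, the fidelity $\tr\phi\rho' = p/d + (1-p)\tr\phi\rho^\star$ is affine in $p$ and can be set to exactly $1-\epsilon$ by a suitable choice of $p$.

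The substantive step is to check that $\rho'$ satisfies the dephasing bound with the \emph{same} constant $\lambda^\star$, i.e. $\rho' \le \lambda^\star\Delta(\rho')$. For this I would use that $\Delta$ is linear and fixes the maximally mixed state, $\Delta(I/d) = I/d$, whence $\Delta(\rho') = p\,I/d + (1-p)\Delta(\rho^\star)$. Scaling $\rho^\star \le \lambda^\star\Delta(\rho^\star)$ by $(1-p)$ and adding $p\,I/d$ gives $\rho' \le p\,I/d + (1-p)\lambda^\star\Delta(\rho^\star)$, while the residual $\lambda^\star\Delta(\rho') - \big(p\,I/d + (1-p)\lambda^\star\Delta(\rho^\star)\big) = p(\lambda^\star-1)\,I/d$ is positive semidefinite exactly because $\lambda^\star \ge 1$. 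Hence $(\rho',\lambda^\star)$ is feasible for \eqref{eq:smoothed-c-max-delta-pure}, so $\log_2\lambda_{\mathrm{eq}} \le \log_2\lambda^\star = C_{\max,\Delta}^\epsilon(\phi)$, and the two directions together yield equality.

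I expect the only genuine difference from the $C_{\max}^\epsilon$ proof — and hence the one point deserving care — to be this operator inequality. In the $C_{\max}$ setting the incoherent reference $\delta$ is a free variable that one simply rescales after depolarizing; here $\Delta(\rho')$ is rigidly determined by $\rho'$, so one must verify that mixing toward $I/d$ does not increase the smallest admissible $\lambda$ in $\rho' \le \lambda\Delta(\rho')$. This is precisely where $\Delta(I/d)=I/d$ combined with $\lambda^\star\ge 1$ does the work. The fidelity computation and the feasible-set comparison are routine and parallel the preceding proof almost verbatim.
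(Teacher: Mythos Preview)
Your proposal is correct and follows essentially the same approach the paper intends: the paper does not spell out a proof of this proposition, merely saying ``Similarly, we can get a strict version of $C_{\max,\Delta}^\epsilon$'', so the depolarizing-channel argument you give is exactly what is meant. Your observation that the reference state $\Delta(\rho')$ is now rigidly tied to $\rho'$ (unlike the free $\delta$ in the $C_{\max}$ case), and that $\Delta(I/d)=I/d$ together with $\lambda^\star\ge 1$ is precisely what closes the operator inequality $\rho'\le\lambda^\star\Delta(\rho')$, is the one genuinely new point and you handle it correctly.
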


Compared with \eqref{eq:dilution-pure-mio-without-rounding-up} and \eqref{eq:dilution-pure-dio-without-rounding-up}, it's easy to find that they are identical to Eq.~\eqref{eq:smoothed-c-max-pure} and Eq.~\eqref{eq:smoothed-c-max-delta-pure}, respectively. This observation is concluded as the following theorem.

\begin{theorem} \label{theorem:relationship-between-dilution-and-monotone}
    For any pure state $\phi \in \mathbb{D}$,
    \begin{align}
        \Tilde{C}_{\mathrm{MIO}}^{(1), \epsilon} (\phi) & = C_{\max}^\epsilon (\phi), \\
        \Tilde{C}_{\mathrm{DIO}}^{(1), \epsilon} (\phi) & = C_{\max, \Delta}^\epsilon (\phi).
    \end{align}
\end{theorem}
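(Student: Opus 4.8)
The plan is to prove both identities by exhibiting, for each free-operation class, an explicit objective-preserving bijection between the feasible region of the smoothed dilution program and that of the corresponding smoothed coherence-monotone program. Both pairs of programs have already been written in the same ``$\min$'' normal form in \eqref{eq:dilution-pure-mio-without-rounding-up}, \eqref{eq:dilution-pure-dio-without-rounding-up}, \eqref{eq:smoothed-c-max-pure} and \eqref{eq:smoothed-c-max-delta-pure}, so it suffices to match variables and check that the constraint sets and objective values coincide term by term. Because the correspondence in each case will be a bijection that leaves the objective unchanged, the two optima agree and no separate ``two inequalities'' argument is logically required, although I will phrase it as forward and backward maps for clarity.

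For the MIO identity, I would identify the pair $(C, G)$ feasible for \eqref{eq:dilution-pure-mio-without-rounding-up} with the triple $(\rho, \delta, \lambda)$ feasible for \eqref{eq:smoothed-c-max-pure} via $\rho = C$, $\lambda = \tr G$, and $\delta = G / \tr G$. First I would check this is well defined: from $0 \leq C \leq G$ we get $G \geq C \geq 0$, and $\tr G \geq \tr C = 1 > 0$, so $\delta$ is a genuine state; the constraint $\Delta(G) = G$ says exactly that $G$, hence $\delta$, is diagonal, i.e.\ $\delta \in \mathcal{I}$. Under this map the operator inequality $0 \leq C \leq G$ becomes $0 \leq \rho \leq \lambda \delta$, the normalization $\tr C = 1$ becomes $\tr \rho = 1$ (so $\rho \in \mathbb{D}$), the fidelity constraint $\tr C \phi = 1-\epsilon$ is unchanged, and the objective satisfies $\tr G = \lambda$. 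The inverse map $(\rho, \delta, \lambda) \mapsto (C, G) = (\rho, \lambda \delta)$ sends a feasible triple of \eqref{eq:smoothed-c-max-pure} back to a feasible pair of \eqref{eq:dilution-pure-mio-without-rounding-up} with $\tr G = \lambda \tr \delta = \lambda$. Hence the two minima coincide and $\Tilde{C}_{\mathrm{MIO}}^{(1),\epsilon}(\phi) = C_{\max}^\epsilon(\phi)$.

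For the DIO identity, the constraint $G = \Delta(G) = m \Delta(C)$ in \eqref{eq:dilution-pure-dio-without-rounding-up} lets me eliminate $G$ altogether: since $m \Delta(C)$ is automatically diagonal, the equation $\Delta(G) = G$ is redundant once $G = m \Delta(C)$ is imposed. Substituting $G = m \Delta(C)$ into $0 \leq C \leq G$ and setting $\rho = C$, $\lambda = m$ turns \eqref{eq:dilution-pure-dio-without-rounding-up} into exactly \eqref{eq:smoothed-c-max-delta-pure}: the constraint becomes $0 \leq \rho \leq \lambda \Delta(\rho)$, the remaining constraints $\tr \rho = 1$ and $\tr \rho \phi = 1-\epsilon$ match, and the objective $m$ equals $\lambda$. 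Running the substitution in reverse (given feasible $(\rho,\lambda)$, set $C = \rho$, $m = \lambda$, $G = m \Delta(C)$) recovers a feasible point of \eqref{eq:dilution-pure-dio-without-rounding-up} with the same objective, giving $\Tilde{C}_{\mathrm{DIO}}^{(1),\epsilon}(\phi) = C_{\max,\Delta}^\epsilon(\phi)$.

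The computations above are all routine, so there is no deep obstacle; the only points that need genuine care are bookkeeping ones. In the MIO case one must verify that $G$ is positive and has strictly positive trace before rescaling it to the state $\delta$ --- both follow from $0 \leq C \leq G$ together with $\tr C = 1$ --- since otherwise the passage to $\delta = G/\tr G$ would be illegitimate. In the DIO case the subtlety is recognizing that the triple equation $G = \Delta(G) = m \Delta(C)$ is really a single definition of $G$ with the diagonality clause redundant, which is what makes the elimination of $G$ clean. I would also note in passing that both programs automatically enforce $\lambda \geq 1$ (equivalently $m \geq 1$): tracing the inequality $\rho \leq \lambda \Delta(\rho)$ (or $\rho \leq \lambda \delta$) and using $\tr \rho = \tr \Delta(\rho) = \tr \delta = 1$ gives $1 \leq \lambda$, so the relaxed variable ranges over exactly the physically meaningful interval and no additional feasibility check is needed.
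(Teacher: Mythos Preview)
Your proposal is correct and follows exactly the approach the paper takes: the paper's proof consists of the single remark that \eqref{eq:dilution-pure-mio-without-rounding-up} and \eqref{eq:dilution-pure-dio-without-rounding-up} ``are identical to'' \eqref{eq:smoothed-c-max-pure} and \eqref{eq:smoothed-c-max-delta-pure}, respectively, and your writeup simply makes that identification of variables ($\rho=C$, $\lambda=\tr G$, $\delta=G/\tr G$ for MIO; $\rho=C$, $\lambda=m$, $G=m\Delta(C)$ for DIO) explicit and checks the constraints line by line.
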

Thm.~\ref{theorem:relationship-between-dilution-and-monotone} gives a relationship between one-shot coherence dilution and coherence monotone, which shows that the lower bound in Thm.~\ref{theorem:zhao2018} is strict when considering pure states.

\section{Numerical Experiment} \label{sec:numerical-experiment}

This section lists some experiment results.

Firstly, we compare the one-shot coherence dilution under MIO and DIO. Though $C_{\text{DIO}}^{(1), \epsilon}$ is not a SDP, but we can still calculate its numerical value. Given a number $m$, the calculation of \eqref{eq:dilution-pure-dio} is a SDP, and obviously only when $m \geq m^\star$ the feasible area is not empty. Therefore, we can calculate $m^\star$ through a bisection method. As $m \in \mathbb{N}$, the result will be discrete. Therefore, we just discard the integer requirement and calculate the continuous version $\Tilde{C}_{\mathrm{MIO}}^{(1), \epsilon}$ in \eqref{eq:dilution-pure-mio-without-rounding-up} and $\Tilde{C}_{\mathrm{DIO}}^{(1), \epsilon}$ in \eqref{eq:dilution-pure-dio-without-rounding-up}.

The test dataset is generated by making superposition of $\ket{\Psi_d}$ and $\ket{0}$, which are the maximally coherent state and one incoherent state respectively.
\begin{equation}
    \ket{\phi} = \alpha \ket{\Psi_d} + \beta \ket{0}
\end{equation}
where $\alpha \in [0, 1]$, $\beta = -\alpha C + \sqrt{\alpha^2 (C^2 - 1) + 1}$ and we denote $C := \braket{\Psi_d}{0}$.
As $\alpha$ increases, $\ket{\phi}$ has more and more coherent components. Fig.~\ref{fig:test1} shows the one-shot coherence dilution rate under MIO and DIO, where $\epsilon = 0.01$.

\begin{figure}[htb]
	\centering
	\includegraphics[width = 0.45\textwidth]{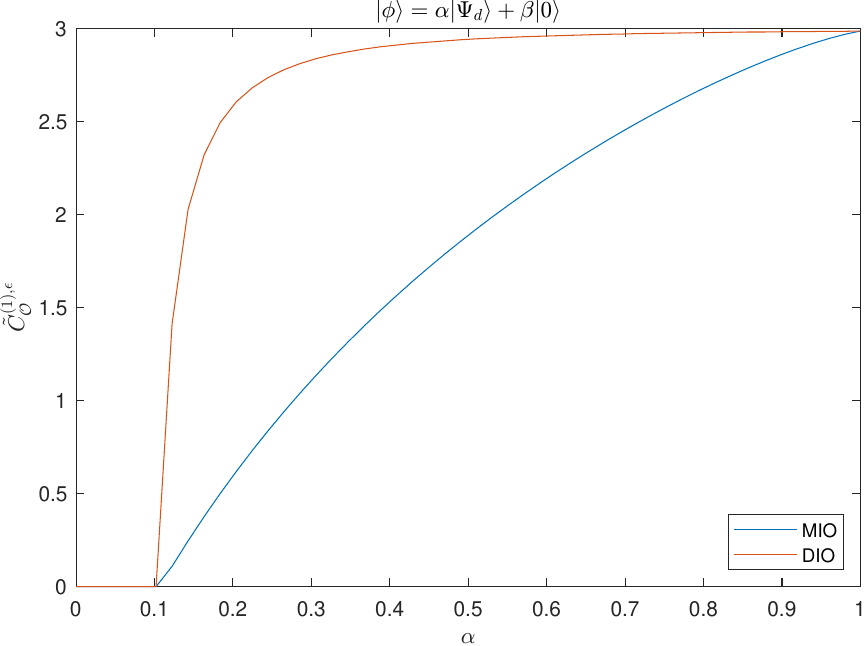}
	\caption{One-shot coherence dilution $\Tilde{C}_{\mathcal{O}}^{(1), \epsilon}$ under MIO and DIO. The state has the form $\ket{\phi} = \alpha \ket{\Psi_d} + \beta \ket{0}$. Here we choose $d = 8$ and $\epsilon = 0.01$.}
	\label{fig:test1}
\end{figure}

It can be clearly seen from the experiment's result that there is a gap between $\Tilde{C}_{\text{MIO}}^{(1), \epsilon}$ and $\Tilde{C}_{\text{DIO}}^{(1), \epsilon}$. And we know $\text{DIO} \subset \text{MIO}$ implies that $C_{ \text{MIO}} \leq C_{ \text{DIO}}$. This experiment shows that this inequality is strict.

We also consider how these two curves behave with different error tolerance $\epsilon$, which is depicted in Fig.~\ref{fig:test2}. As we can see, no matter what $\epsilon$ is, there is always a gap between $\Tilde{C}_{\text{MIO}}^{(1), \epsilon}$ and $\Tilde{C}_{\text{DIO}}^{(1), \epsilon}$ and two curves behave like Fig.~\ref{fig:test1}. The two curves separate at a certain point and finally converge at the same point. As $\epsilon$ increases, the separating point moves right and the converging point moves down. And the curves with higher tolerance $\epsilon$ are lower than those with lower tolerance. That means $\Tilde{C}_{\mathcal{O}}^{(1), \epsilon} > \Tilde{C}_{\mathcal{O}}^{(1), \epsilon'}$ when $\epsilon < \epsilon'$, which is absolutely right.
\begin{figure}[htb]
	\centering
	\includegraphics[width=0.45\textwidth]{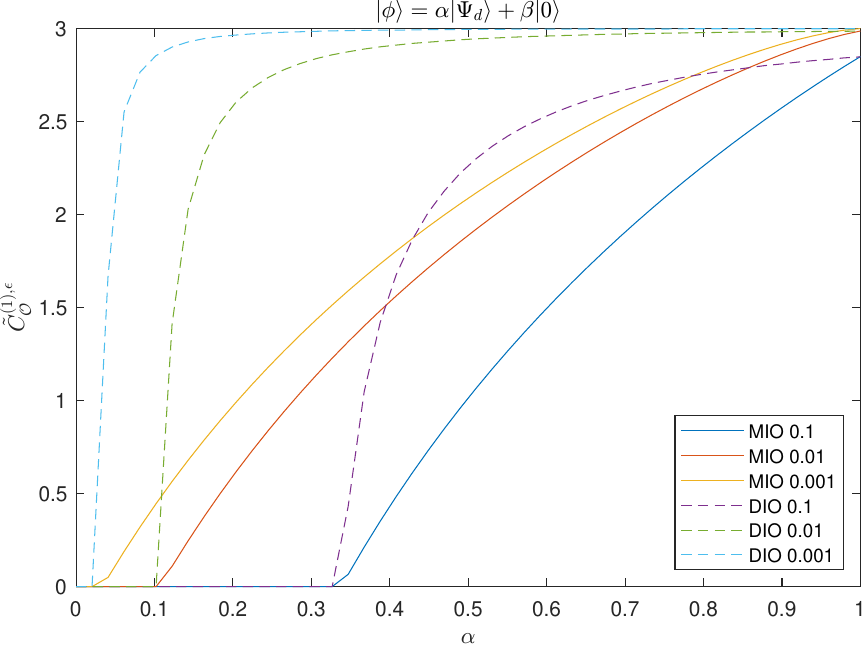}
	\caption{One-shot coherence dilution $\Tilde{C}_{\mathcal{O}}^{(1), \epsilon}$ under MIO and DIO with different error tolerance. The state has the form $\ket{\phi} = \alpha \ket{\Psi_d} + \beta \ket{0}$. Here we choose $d = 8$ and $\epsilon \in \{0.1, 0.01, 0.001\}$.}
	\label{fig:test2}
\end{figure}

\section{Conclusion}

In this paper, we have proposed a SDP form of the one-shot coherence dilution of pure states under MIO and a similar form under DIO. 
Although $C_{\text{DIO}}^{(1), \epsilon}$ is not a semidefinite programming, we can still use a bisection method to get its numerical value. 
Moreover, we compared our results with the inequalities in the previous work, and we have proved that the lower bound is strict for pure states.
Through numerical experiments, we clearly show that MIO and DIO have different power in one-shot coherence dilution, even though they have the same power in the asymptotic scenario and coherence distillation.

The coherence dilution of general states $\rho$ is hard to represent, since we still don't know how to treat the fidelity between two mixed states.
Our results are only applicable to pure states. It is still an open problem that whether the lower bounds of Thm.~\ref{theorem:zhao2018} are strict.
And we have already known the asymptotic and one-shot scenarios of coherence dilution, but the second order of dilution is still an open problem. We hope our numerical results will be of some help.

\begin{acknowledgments}
    We are with grateful respect to thank Professor Xing Wang and his group member for meaningful advises and discussions. This research was partially supported by the Innovation Program for Quantum Science and Technology (Grant No. 2021ZD0302900), National Natural Science Foundation of China (Grants No. 62002333).
\end{acknowledgments}


\bibliographystyle{unsrt}
\bibliography{ref}

\end{document}